
\documentclass[11pt]{article}

\usepackage{graphicx,amssymb,amsmath}
\usepackage{epsfig}
\usepackage{epsf}
\usepackage[small]{caption}
\usepackage{amsfonts}
\usepackage{latexsym}
\usepackage{fancyvrb}
\usepackage{amssymb}
\usepackage{amsthm}
\usepackage{graphics}
\usepackage{psfrag}
\usepackage{float}
\usepackage{bbm}

\setlength{\oddsidemargin}{0in}
\setlength{\evensidemargin}{0in}
\setlength{\topmargin}{0in}
\setlength{\headheight}{0in}
\setlength{\headsep}{0in}
\setlength{\textwidth}{6.45in}
\setlength{\textheight}{8.8in}

\newtheorem{theorem}{Theorem}
\newtheorem{lemma}{Lemma}

\newtheorem{observation}{Observation}

\newcommand{\old}[1]{{}}
\newcommand{\later}[1]{{}}

\def\ie{{i.e.}}

\newcommand{\eps}{\varepsilon}

\def\L{\mathcal L}
\def\R{\mathcal R}

\newcommand{\len}{{\rm len}}
\newcommand{\per}{{\rm per}}
\newcommand{\conv}{{\rm conv}}

\newcommand{\longest}{{\rm long}}

\title{The traveling salesman problem for lines and rays in 
the plane\footnote{An earlier version of this paper
appeared in the {\em Proceedings of the 22nd Canadian Conference on
Computational Geometry} (CCCG 2010), Winnipeg, Manitoba, Canada,
August 2010, pp.~257--260.}}

\author{
Adrian Dumitrescu\thanks{Department of Computer Science,
University of Wisconsin--Milwaukee, WI 53201-0784, USA\@.
Email:~\texttt{dumitres@uwm.edu}.
Supported in part by NSF CAREER grant CCF-0444188
and by NSF grant DMS-1001667.
}}

\begin{document}


\maketitle

\begin{abstract}
In the Euclidean TSP with neighborhoods (TSPN), we are given a
collection of $n$ regions (neighborhoods) and we seek a shortest
tour that visits each region. In the path variant, we seek
a shortest path that visits each region. We present several
linear-time approximation algorithms with improved ratios for these
problems for two cases of neighborhoods that are (infinite) lines, 
and respectively, (half-infinite) rays.
Along the way we derive a tight bound on the minimum perimeter of 
a rectangle enclosing an open curve of length $L$. 

\medskip
\noindent\textbf{\small Keywords}:
Traveling salesman problem with neighborhoods,
linear programming,
minimum-perimeter rectangle,
approximation algorithm,
lines,
rays.

\end{abstract}

\section{Introduction}  \label{sec:intro}

In the Euclidean Traveling Salesman Problem (TSP),
given a set of points in the plane, one seeks a shortest
{\em tour} (closed curve) that visits each point. 
In the path variant, one seeks a shortest {\em path} (open curve) that
visits each point. If now each point is replaced by a 
(possibly disconnected) region, one obtains the so-called 
TSP with {\em neighborhoods} (TSPN), first studied by Arkin and
Hassin~\cite{AH94}. A tour for a set of neighborhoods is also
referred to as a TSP tour. A path for a set of neighborhoods is also
referred to as a TSP path.

For the case of neighborhoods that are (infinite) straight lines, an
optimal tour can be computed in $O(n^5)$ time~\cite{CJN99,THI99,T01} 
(see also~\cite{J02}), and a $\sqrt{2}$-approximation can be computed
in $O(n)$ time~\cite{J02}.  
For the case of neighborhoods that are (half-infinite) rays, 
no polynomial time algorithm is known for computing an optimal tour,
and a $\sqrt{2}$-approximation can be computed in $O(n)$ time~\cite{J02}. 
In this paper we present linear-time approximation algorithms 
with improved ratios for these problems. The obvious motivation 
is to provide faster and conceptually simpler algorithmic
solutions. As mentioned above, while for the case of rays 
no polynomial time algorithm is known, for the case of lines, 
the known algorithms reduce the problem of computing an optimal
tour of the lines to that of computing an optimal watchman tour in a
simple polygon for which the existent algorithms are quite
involved and rather slow for large $n$~\cite{CJN99,THI99,T01}.  

In this paper we present four improved linear-time approximation
algorithms for TSP, for two cases of neighborhoods, that are straight
lines, and respectively, straight rays in the plane. We consider 
two variants of the problem: that of computing a shortest tour and
that of computing a shortest path visiting the input set. 
Our algorithms are all based on solving low-dimensional
linear programs. Our results are summarized in Table~\ref{table}.

\begin{theorem}\label{T1}
Given a set of $n$ lines in the plane:
{\rm (i)} A TSP tour that is at most $1.28$ times longer than the optimal can be
computed in $O(n)$ time. 
{\rm (ii)} A TSP path that is at most $1.42$ times longer than the optimal can be
computed in $O(n)$ time. 
\end{theorem}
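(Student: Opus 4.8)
\medskip
\noindent\textbf{Proof strategy.}\quad
The plan is to reduce both variants to the problem of finding a short rectangle that is met by every input line, and then to charge the cost of such a rectangle against the optimum using two isoperimetric-type estimates for the bounding rectangle of a curve. Say that a region \emph{stabs} a line if it meets it. If $R$ is any rectangle stabbing all of $\ell_1,\dots,\ell_n$, then $\partial R$ is a closed curve meeting every $\ell_i$, i.e.\ a TSP tour of length $\per(R)$. Furthermore, deleting one longest side of $R$ leaves a ``staple'' of length $\per(R)-\longest(R)$ (with $\longest(R)$ the length of a longest side) that still meets every line met by $R$: a line crossing the interior of $R$ exits through two points on two \emph{distinct} sides, and a supporting line of $R$ either passes through a vertex or contains an entire side, so in every case it meets the staple, since all four corners of $R$ lie on it. Hence the staple is a TSP path of length $\per(R)-\longest(R)$. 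It therefore suffices to exhibit a stabbing rectangle of small perimeter for part~(i) and one of small value $\per(R)-\longest(R)$ for part~(ii), and to output its boundary, respectively its staple.

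\medskip
\noindent\textbf{Lower bounds via enclosing rectangles.}\quad
Here are the two lemmas I would prove: (a) every closed curve of length $L$ lies in some arbitrarily oriented rectangle of perimeter at most $\tfrac{4}{\pi}L$, and this constant is tight (the circle); (b) every open curve of length $L$ lies in some rectangle $R$ with $\per(R)-\longest(R)\le\sqrt{2}\,L$, and this constant is tight. For (a), for a convex body $C$ the bounding box in orientation $\theta$ has perimeter $2\,(w_C(\theta)+w_C(\theta+\tfrac{\pi}{2}))$, whose average over $\theta\in[0,\tfrac{\pi}{2}]$ equals $\tfrac{4}{\pi}\per(C)$ by Cauchy's perimeter formula, so some orientation attains at most that; one applies this to $C=\conv(T)$ for an optimal tour $T$ and uses $\per(\conv(T))\le\len(T)$. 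For (b) one reduces first to polygonal arcs and then, by an extremal argument (moving a hull vertex in a suitable direction cannot increase $\min_R(\per(R)-\longest(R))$ while it can only shorten the arc), to arcs tracing two sides of a triangle; minimizing $\per(R)-\longest(R)$ over orientations for a fixed triangle is a one-parameter piecewise-smooth problem whose worst case is the isoceles right triangle traced along its two legs, which forces $\sqrt{2}$. Applying (a) to $\conv(T)$, respectively (b) to an optimal path $\Pi$, produces a \emph{stabbing} rectangle $R$ — it contains the optimal tour, resp.\ path, hence every line meets it — with $\per(R)\le\tfrac{4}{\pi}\len(T)$, resp.\ $\per(R)-\longest(R)\le\sqrt{2}\,\len(\Pi)$. (A companion statement, the tight bound on the minimum \emph{perimeter} of a rectangle enclosing an open curve of length $L$, has a larger constant and is best stated separately.)

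\medskip
\noindent\textbf{The algorithm.}\quad
For each of a suitable constant number of orientations $\theta$, rotate so that $\theta$ is axis-parallel; a rectangle $[X_1,X_2]\times[Y_1,Y_2]$ then stabs the line which in the rotated frame is $a_iX+b_iY=c_i$ exactly when the minimum and the maximum of $a_iX+b_iY$ over the rectangle straddle $c_i$, and since the signs of $a_i$ and $b_i$ are fixed these are two linear inequalities in $(X_1,X_2,Y_1,Y_2)$. Minimizing $2(X_2-X_1)+2(Y_2-Y_1)$ gives, for part~(i), a fixed-dimension linear program with $O(n)$ constraints; for part~(ii) one runs instead the two linear programs minimizing $(X_2-X_1)+2(Y_2-Y_1)$ and $2(X_2-X_1)+(Y_2-Y_1)$ and keeps the better solution (so that $\per(R)-\longest(R)$ is correctly minimized regardless of which side is longer). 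Each LP is solved in $O(n)$ time; over the $O(1)$ orientations we keep the best rectangle and output its boundary (part~(i)) or its staple (part~(ii)). By the reduction together with lemmas (a) and (b), and choosing enough orientations that the discretization loss is absorbed, the produced tour has length at most $1.28\,\len(T)$ and the produced path at most $1.42\,\len(\Pi)$; the whole computation is $O(n)$.

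\medskip
\noindent\textbf{Where the difficulty lies.}\quad
The crux is the tight constant $\sqrt{2}$ in Lemma (b): both the reduction to triangular hulls and the ensuing optimization over the shape of the triangle need care, especially the obtuse and non-isoceles cases and the justification that for a triangle one may restrict attention to orientations in which a rectangle side is flush with a hull edge (or else handle the smooth critical points). A secondary, more routine point is controlling the orientation-discretization error — one wants the bounding-rectangle functionals to be Lipschitz in $\theta$ with constant proportional to the optimum, so that $O(1)$ orientations keep the ratios below $1.28$ and $1.42$; alternatively one can optimize over all orientations exactly by sweeping the $O(n)$ combinatorial types of the governing linear program, at the cost of a more involved algorithm.
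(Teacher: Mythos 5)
Your overall architecture coincides with the paper's: reduce to a minimum-cost stabbing rectangle, observe that any three sides of a stabbing rectangle still meet every line (so a ``staple'' is a valid path), bound the optimum from below via the two enclosing-rectangle lemmas with constants $\tfrac{4}{\pi}$ and $\sqrt{2}$, and compute the rectangle by $O(1)$ four-variable linear programs over discretized orientations. Your lemma~(a) via Cauchy's formula, the staple observation, the LP formulation (your sign-based straddling constraints even sidestep the paper's random-$\eps$ trick for avoiding vertical lines), your two-LP handling of the asymmetric objective (the paper instead doubles the number of orientations), and the discretization analysis are all sound and match the paper's Lemmas~\ref{L1}, \ref{L2}, \ref{L4} and Observation~\ref{O3}.

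The genuine gap is exactly where you locate it: lemma~(b), the bound $\per(R)-\longest(R)\le\sqrt{2}\,L$. Your proposed route --- an extremal perturbation of hull vertices reducing to arcs that trace two sides of a triangle, followed by an optimization over triangle shapes and rectangle orientations --- is not carried out, and as sketched it is not clear it can be: you would need to justify why the maximizer of $\min_R(\per(R)-\longest(R))/\len(\gamma)$ is polygonal, why vertex perturbations that ``can only shorten the arc'' do not also decrease the rectangle functional faster, and why the process terminates at a triangle rather than at some longer zigzag. None of this machinery is needed. The paper proves the lemma directly: rotate so the chord $ab$ joining the endpoints of $\gamma$ is horizontal, let $Q$ be the axis-aligned bounding box with width $w$ and height $h$, and subdivide $\gamma$ into sub-arcs whose endpoints lie on distinct sides of $Q$, replacing each sub-arc by its chord $s_i$. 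Since $\Delta x(s_i)+\Delta y(s_i)\le\sqrt{2}\,|s_i|$ for every segment, the total is at most $\sqrt{2}\,L$; on the other hand the curve must reach both vertical sides starting and ending on the chord $ab$, giving $\sum\Delta x(s_i)\ge 2w-|ab|\ge w$, and must reach both horizontal sides with $a$ and $b$ at equal height, giving $\sum\Delta y(s_i)\ge 2h$. Hence $w+2h\le\sqrt{2}\,L$, and $\per(Q)-\longest(Q)\le w+2h$. Tightness is then checked on the two legs of an isosceles right triangle. I would replace your extremal sketch with this projection argument; with that substitution your proof is complete and is essentially the paper's.
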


For lines, the previous best approximations obtained in linear time were 
$\sqrt2 \approx 1.41$ and $2\sqrt2 \approx 2.82$, respectively~\cite{J02}.

\begin{theorem}\label{T2}
Given a set of $n$ rays in the plane:
{\rm (i)} A TSP tour that is at most $1.28$ times longer than
the optimal can be computed in $O(n)$ time. 
{\rm (ii)} A TSP path that is at most $2.24$ times longer than
the optimal can be computed in $O(n)$ time. 
\end{theorem}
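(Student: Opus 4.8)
\smallskip
\noindent\emph{Proof plan.}
The plan is to reduce both parts to one subroutine: compute (nearly) the minimum perimeter, over all orientations $\theta$ of the rectangle, of a rectangle that is \emph{transversal} to the input, i.e.\ that meets every one of the $n$ rays. For part~(i) I would output the boundary of such a rectangle; a rectangle is convex, so if it meets a ray $r_i$ then its boundary meets $r_i$, hence this boundary is a TSP tour. For part~(ii), as in Theorem~\ref{T1}, I would first shrink an optimal transversal rectangle infinitesimally so that no ray meets it only in the relative interior of a single side, and then output three of its four sides; every ray still meets one of the retained sides, so this open curve is a TSP path (and, as a fallback, the full boundary is already an admissible path).

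For the ratio in part~(i): if $T^*$ is an optimal tour, then $\conv(T^*)$ is a convex transversal and $\per(\conv(T^*))\le\len(T^*)=\mathrm{OPT}$. By Cauchy's perimeter formula, the average over $\theta\in[0,\pi)$ of the perimeter of the bounding box of $\conv(T^*)$ taken in orientation $\theta$ equals $\tfrac4\pi\,\per(\conv(T^*))$; therefore some orientation gives a bounding box of $\conv(T^*)$ --- itself a rectangular transversal --- of perimeter at most $\tfrac4\pi\,\mathrm{OPT}<1.28\,\mathrm{OPT}$. It thus suffices to compute a rectangular transversal whose perimeter is within a tiny additive slack of the minimum over all orientations. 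Algorithmically I would sample a constant number $k$ of orientations equally spaced in $[0,\pi/2)$ and, for each, solve the corresponding minimum-perimeter transversal-rectangle problem; since the bounding-box perimeter of a fixed convex body is Lipschitz in $\theta$ with constant $O(\mathrm{OPT})$, a large enough constant $k$ keeps the returned rectangle below $1.28\,\mathrm{OPT}$.

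The per-orientation problem is a linear program in the four unknowns $x_1\le x_2$, $y_1\le y_2$ describing the axis-parallel rectangle $[x_1,x_2]\times[y_1,y_2]$ in the rotated frame, with objective $2(x_2-x_1)+2(y_2-y_1)$. This is where the ray case differs from the line case and where I expect the main bookkeeping: a ray is one-sided, so ``the ray meets the rectangle'' is not symmetric. After classifying each ray by the quadrant (or axis direction) of its rotated direction vector, the condition becomes ``the interval of parameters $t$ along the ray for which the ray lies in the vertical slab $x_1\le x\le x_2$ meets the corresponding interval for the horizontal slab $y_1\le y\le y_2$, and this common subinterval meets $[0,\infty)$''; the endpoints of those intervals are ratios of affine forms in $x_1,x_2,y_1,y_2$ with constant-sign denominators, so within each class the condition is a conjunction of $O(1)$ genuine linear inequalities (no hidden disjunctions). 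One must also check that degenerate outputs --- a rectangle that is a segment or a point --- are handled correctly both by the LP and by the tour/path construction. Each orientation thus yields a $4$-dimensional LP with $O(n)$ constraints, solvable in $O(n)$ time by low-dimensional linear programming, and $k=O(1)$ orientations keep the total running time linear; this proves part~(i).

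For part~(ii) I would reuse the same LP family and output, as above, a three-sided open curve of the best computed rectangle, but charge it against the path optimum $L^*$. The optimal path $P^*$ is an open curve of length $L^*$, so by the lemma on the minimum perimeter of a rectangle enclosing an open curve it fits inside a rectangle whose perimeter is at most that lemma's tight bound times $L^*$; since this rectangle contains $P^*$ it is a transversal, so the minimum-perimeter rectangular transversal (over orientations) is bounded in terms of $L^*$. Unwinding the constants --- the lemma's bound, the orientation search, and the freedom to choose which side of the computed rectangle to drop --- should give an open transversal curve of length at most $2.24\,L^*$. Besides the ray-specific LP bookkeeping above, the only delicate points I anticipate are verifying the quadrant-classification constraints are truly disjunction-free and tuning the orientation discretization (and side-dropping choice) so that the final constants come out exactly $1.28$ and $2.24$ rather than slightly larger.
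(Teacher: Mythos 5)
Your part~(i) is essentially the paper's argument: quadrant-classify the rays, express ``ray meets axis-parallel rectangle'' as a conjunction of linear inequalities in $x_1,x_2,y_1,y_2$, solve a $4$-dimensional LP per orientation, take $O(1)$ orientations, and charge against the optimal tour via the $\tfrac{4}{\pi}$ bounding-box bound (the paper's Lemma~\ref{L1} plus a discretization lemma). Your slab-interval derivation of the constraints does come out disjunction-free, matching the paper's explicit conditions (supporting line separates one diagonal pair of corners, plus a dominance condition on the apex), so part~(i) is fine.

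Part~(ii) has a genuine gap, in two respects. First, your primary plan --- drop one side of the transversal rectangle, as in the line case --- fails for rays, and no infinitesimal shrinking repairs it: a ray whose apex lies in the interior of the rectangle crosses the boundary in exactly one point of one side, and after shrinking the apex is still interior, so omitting that side loses the ray. Observation~\ref{O3} is genuinely a statement about two-sided lines. This is precisely why the paper outputs the \emph{full} boundary of the rectangle as the path for rays, and why the target constant is $2.24\approx\sqrt5$ rather than $\approx\sqrt2\,(1+\eps)\approx 1.42$; your remark about ``the freedom to choose which side to drop'' contributing to the constant $2.24$ conflates the two regimes. Second, once you fall back to the full boundary, the entire quantitative content of part~(ii) is the claim that any open curve of length $L$ fits in a rectangle of perimeter at most $\sqrt5\,L$ (the paper's Lemma~\ref{L5}, proved by choosing the bounding box in the frame where the endpoints $a,b$ are horizontally aligned, unfolding the curve by successive reflections of the box to get a straight-line lower bound $L^2\ge w^2+4h^2$ with $|ab|\le w$, and optimizing $\tfrac{2+\sqrt{\lambda^2-1}}{\lambda}$ to get $\sqrt5$ at $w=2L/\sqrt5$). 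You invoke ``that lemma's tight bound'' as a black box without stating or deriving the constant; without it one cannot conclude $2.24$, so the crux of part~(ii) is missing from the proposal.
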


For rays, the previous best approximation for tours was 
$\sqrt2 \approx 1.41$~\cite{J02} (obtained also in 
linear time, however this was the only approximation known), while for
paths there was no approximation known.

\medskip
\begin{table*}[t]
\begin{center}
\begin{tabular}{||c|c|c|c|c||} \hline
Ratio & Tour (old ratio) & Tour (new ratio) & Path (old ratio) &
Path (new ratio)  \\ \hline \hline
 Lines & $\sqrt2=1.41\ldots$ & $1.28$ &
$2\sqrt2=2.82\ldots$ & $1.42$ \\ \hline
 Rays & $\sqrt2=1.41\ldots$ & $1.28$ &
 $-$  & $2.24$ \\ \hline
\end{tabular}
\end{center}
\caption{Old and new approximation ratios. No approximation for paths
  on rays was reported in~\cite{J02}.}
\label{table}
\end{table*}

\paragraph{Preliminaries.}
We use the following terms and notations.
We denote by $x(p)$ and $y(p)$ the $x$ and $y$-coordinates 
of a point $p$. 
We say that point $q$ dominates point $p$ if $x(p) \leq x(q)$ 
and  $y(p) \leq y(q)$.
For a segment $s$, $\Delta{x}(s)$ and $\Delta{y}(s)$ denote 
the lengths of its horizontal and vertical projections. 
The convex hull of a planar set $A$ is denoted by $\conv(A)$.
The Euclidean length of a curve $\gamma$ is denoted by $\len(\gamma)$.
For a polygon $P$, let $\per(P)$ denote its perimeter. 
For a rectangle $Q$, let $\longest(Q)$ denote the length of a longest
side of $Q$. 
For a ray $\rho$, let $\ell(\rho)$ denote its supporting line. 

The inputs to the two variants of TSP we consider 
are a set of lines or a set of rays.  
Let $\L$ be a given set of $n$ lines, and 
let $T^*(\L)$ be an optimal tour (circuit) of the lines in $\L$. 
Let $\R$ be a given set of $n$ rays, and 
let $T^*(\R)$ be an optimal tour (circuit) of the rays in $\R$. 

Following the terminology from~\cite{DJ10,R95},
a polygon is an \emph{intersecting polygon} of a set of regions 
in the plane if every region in the set intersects the interior
or the boundary of the polygon. The problem of computing a
minimum-perimeter intersecting polygon (MPIP) for the case 
when the regions are line segments was first considered by 
Rappaport~\cite{R95} in 1995. As of now,
MPIP (for line segments) is not known to be polynomial, nor it is
known to be NP-hard. 

Since both lines and rays are infinite (\ie, unbounded regions) 
finding optimal tours $T^*(\L)$ and $T^*(\R)$ are equivalent to
finding minimum-perimeter intersecting polygons (MPIPs) 
for $\L$ and $\R$ respectively. 
We can assume without loss of generality that
not all lines in $\L$ are concurrent
at a common point (this can be easily checked in linear time),
thus $\per(T^*(\L)) > 0$. The same assumption can be made for the rays
in $\R$, thus $\per(T^*(\R)) > 0$. 

The following two facts are easy to prove;
see also~\cite{DJ10,DM03,R95}.

\begin{observation}\label{O1}
If $P_1$ is an intersecting polygon of $\L$,
and $P_1$ is contained in another polygon $P_2$,
then $P_2$ is also an intersecting polygon of $\L$. 
The same statement holds for $\R$. 
\end{observation}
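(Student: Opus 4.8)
The plan is to argue directly from the definition of an intersecting polygon, using nothing more than the set inclusion $P_1\subseteq P_2$. Fix an arbitrary line $\ell\in\L$. By hypothesis $\ell$ meets the interior or the boundary of $P_1$, i.e.\ there is a point $p\in\ell$ with $p\in\mathrm{int}(P_1)\cup\partial P_1$. First I would note that in either case $p\in P_1$, and since $P_1\subseteq P_2$ we get $p\in P_2=\mathrm{int}(P_2)\cup\partial P_2$. Hence $\ell$ intersects the interior or boundary of $P_2$. Since $\ell\in\L$ was arbitrary, $P_2$ is an intersecting polygon of $\L$. Replacing each line $\ell$ by a ray $\rho\in\R$ gives the statement for $\R$ word for word.

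If one wants to be fastidious about which part of $P_2$ is hit, this can be spelled out by the usual topological observation: a point $p\in\mathrm{int}(P_1)$ has an open neighborhood contained in $P_1$, hence contained in $P_2$, so $p\in\mathrm{int}(P_2)$; while a point $p\in\partial P_1$ simply lies in $P_2$ and therefore in $\mathrm{int}(P_2)\cup\partial P_2$. Either way the conclusion required by the definition holds. I do not anticipate any genuine obstacle here: the statement is just a monotonicity property of the ``intersecting polygon'' relation under containment, and the argument is a couple of lines; the only care needed is to track that a boundary point of the inner polygon may become an interior or a boundary point of the outer polygon, both of which are admissible.
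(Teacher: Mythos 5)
Your proof is correct and is exactly the routine containment argument the paper has in mind — the paper states Observation~1 without proof, calling it easy, and your two-line verification (a point of $\ell$ in $P_1=\mathrm{int}(P_1)\cup\partial P_1$ lies in $P_2=\mathrm{int}(P_2)\cup\partial P_2$) is all that is needed. Nothing further to add.
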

\begin{observation}\label{O2}
$T^*(\L)$ is a convex polygon with at most $n$ vertices.
Similarly, $T^*(\R)$ is a convex polygon with at most $n$ vertices.
\end{observation}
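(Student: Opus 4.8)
The plan is to prove both statements by a single argument, given for $\L$; the case of $\R$ is identical once one notes that an intersecting polygon of $\R$ must meet each ray of $\R$ itself, not merely its supporting line. Write $P := T^*(\L)$ and recall that $\per(P)>0$.

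First I would show that $P$ is convex. If it were not, then $P$ is properly contained in $\conv(P)$, and by Observation~\ref{O1} the latter is again an intersecting polygon of $\L$. However $\per(\conv(P)) < \per(P)$ whenever $P$ is a non-convex polygon: passing to the convex hull replaces one or more boundary sub-paths of $P$ by the straight chords joining their endpoints, and each such chord is strictly shorter than the sub-path it replaces (see~\cite{DJ10,DM03,R95}). This contradicts the optimality of $P$, so $P$ must be convex.

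The core of the vertex bound is the following claim: every vertex $v$ of the convex polygon $P$ is \emph{pinned}, meaning that some line $\ell\in\L$ satisfies $\ell\cap P=\{v\}$. To prove it I would argue by contradiction and ``cut the corner'' at a non-pinned vertex $v$. Pick a line $h$ that strictly separates $v$ from all the other vertices of $P$, and let $P'$ be the (convex) portion of $P$ lying in the closed half-plane bounded by $h$ that contains those other vertices. Then $P'\subsetneq P$, and $\per(P')<\per(P)$: since the interior angle of $P$ at $v$ is less than $\pi$, replacing the sub-path of $\partial P$ cut off at $v$ by the chord of $\partial P'$ along $h$ is a strict shortcut. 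By optimality, $P'$ is not an intersecting polygon, so some line $\ell\in\L$ misses $P'$; as $P$ meets every line, $\ell$ meets $P$ only inside the thin triangular sliver $P\setminus P'$ hugging $v$. Now let $h$ run through a sequence of such cutting lines whose slivers have diameter tending to $0$. Each sliver supplies a line of $\L$ as above; since $\L$ is finite, one fixed $\ell\in\L$ is supplied by infinitely many slivers, so $\ell\cap P$ lies in every member of an infinite family of slivers shrinking to $v$. Hence $\ell\cap P\subseteq\{v\}$, and since $\ell\cap P\neq\emptyset$ we get $\ell\cap P=\{v\}$, contradicting that $v$ is not pinned.

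To finish, note that pinning is injective on the vertices of $P$: two distinct vertices $u\neq v$ cannot be pinned by the same line, for such a line would meet $P$ in the two distinct points $u$ and $v$. Therefore the number of vertices of $P$ is at most $|\L|=n$, and the same reasoning yields the statement for $\R$. I expect the delicate point to be the corner-cutting and limiting argument: one must make precise that $h$ can be taken arbitrarily close to $v$ with $P'$ remaining a genuine polygon, and dispose of the easy degenerate configurations (for instance $P$ being a segment), which are harmless since the only situation with $n\le 1$ or $\per(P)=0$ — all lines concurrent — has been excluded.
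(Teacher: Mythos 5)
Your argument is correct in substance, but note that the paper itself offers no proof of Observation~\ref{O2}; it declares the fact ``easy to prove'' and points to~\cite{DJ10,DM03,R95}. The standard argument in those references is shorter and more global than yours: for each line $\ell\in\L$ pick a witness point $p_\ell\in\ell\cap T^*(\L)$ and set $P'=\conv(\{p_\ell\})$; then $P'$ is an intersecting polygon with at most $n$ vertices contained in $\conv(T^*(\L))$, and monotonicity of perimeter under inclusion of convex bodies (strict for proper inclusion) forces $T^*(\L)=P'$, which yields convexity, polygonality, and the vertex bound in one stroke. Your route instead establishes the vertex bound by a local perturbation: every vertex must be ``pinned'' by a line meeting $P$ only at that vertex, and pinning is injective. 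This is a valid and somewhat more informative statement (it characterizes which lines are responsible for which vertices), at the price of the compactness/pigeonhole limiting argument over shrinking slivers, which the global argument avoids entirely. One point to tighten: your corner-cutting step presupposes that $P$ is already a polygon with finitely many vertices (you ``separate $v$ from all the other vertices''), whereas a priori $T^*(\L)$ is only known to be a shortest closed curve meeting all lines; either first invoke the witness-point argument to get polygonality, or rephrase the cut at an arbitrary extreme point of the convex body $\conv(T^*(\L))$, where the same strict shortcut estimate applies and bounds the number of extreme points by $n$. Also make sure the connectedness of $\ell\cap P$ (a segment, or a segment/point for a ray) is stated explicitly, since it is what guarantees that a line missing $P'$ has its entire intersection with $P$ inside the sliver.
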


A key fact in the analysis of the approximation algorithms for
computing tours is the following lemma. 
This inequality is implicit in~\cite{W93}; a more
direct proof can be found in~\cite{DJ10}.  

\begin{lemma} {\rm~\cite{DJ10,W93}.} \label{L1}
Let $P$ be a convex polygon. Then the minimum-perimeter rectangle $Q$
containing $P$ satisfies $\per(Q) \leq \frac{4}{\pi}\,\per(P)$.
\end{lemma}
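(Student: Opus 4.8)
The plan is to relate both $\per(P)$ and the perimeters of the candidate circumscribing rectangles to the \emph{width function} of $P$, and then conclude by an averaging argument over the orientation of the rectangle. For a direction $\theta$, let $w(\theta)$ denote the width of $P$ in direction $\theta$, that is, the length of the orthogonal projection of $P$ onto a line of direction $\theta$. The first step is to record Cauchy's formula $\per(P) = \int_0^{\pi} w(\theta)\, d\theta$; for a convex polygon this is elementary. The projection onto a direction-$\theta$ line of an edge $e$ of length $\ell_e$ making angle $\varphi_e$ with that line has length $\ell_e\,|\cos(\theta-\varphi_e)|$, and since $P$ is convex, each point of its projection is covered by exactly two edges, so $w(\theta) = \tfrac12\sum_e \ell_e\,|\cos(\theta-\varphi_e)|$. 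Integrating $|\cos|$ over a half-period contributes a factor $2$ per edge, giving $\int_0^\pi w(\theta)\,d\theta = \tfrac12\sum_e \ell_e\cdot 2 = \per(P)$.

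The second step is the averaging argument. For each $\theta$ there is a (unique) rectangle $Q_\theta$ with one pair of sides in direction $\theta$ that circumscribes $P$, and its side lengths are precisely the two complementary widths $w(\theta)$ and $w(\theta+\pi/2)$, so $\per(Q_\theta) = 2\bigl(w(\theta)+w(\theta+\pi/2)\bigr)$. The minimum-perimeter enclosing rectangle $Q$ satisfies $\per(Q) \le \per(Q_\theta)$ for every $\theta$, hence $\per(Q)$ is at most the average of $\per(Q_\theta)$ over $\theta\in[0,\pi/2]$:
\[
\per(Q) \;\le\; \frac{2}{\pi}\int_0^{\pi/2} 2\bigl(w(\theta)+w(\theta+\pi/2)\bigr)\, d\theta
\;=\; \frac{4}{\pi}\int_0^{\pi} w(\theta)\, d\theta
\;=\; \frac{4}{\pi}\,\per(P),
\]
where the middle step substitutes $\phi=\theta+\pi/2$ in the term $w(\theta+\pi/2)$ to reassemble the integral over $[0,\pi]$, and the last step is Cauchy's formula from the first step.

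Finally I would dispatch the minor technical points: $w$ is continuous (in fact piecewise sinusoidal) for a convex polygon, so all the integrals are well defined and the minimum over orientations is attained; and although the minimum-perimeter rectangle need not be aligned with any edge of $P$, the argument never needs to identify its orientation — it only uses that $Q$ is no worse than every $Q_\theta$. I do not expect a genuine obstacle here. The only place to be careful is the bookkeeping in the derivation of Cauchy's formula (the factor $\tfrac12$ from the double covering of the projection, and the half-period integral of $|\cos|$ equalling $2$) and the claim that the circumscribing rectangle in a fixed orientation has exactly the two complementary widths as its side lengths; alternatively one may simply cite Cauchy's projection formula as standard, which collapses the proof to essentially the displayed chain of (in)equalities.
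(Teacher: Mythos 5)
Your argument is correct. The paper itself gives no proof of Lemma~\ref{L1} --- it is quoted from~\cite{W93,DJ10} --- but your derivation (Cauchy's projection formula $\per(P)=\int_0^{\pi}w(\theta)\,d\theta$, the identity $\per(Q_\theta)=2\bigl(w(\theta)+w(\theta+\pi/2)\bigr)$ for the circumscribing rectangle of orientation $\theta$, and averaging over $\theta$) is precisely the standard ``direct proof'' that the cited reference~\cite{DJ10} supplies, with all the bookkeeping (the factor $\tfrac12$ from the double cover and $\int_0^\pi|\cos|=2$) done correctly.
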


\section{TSP for lines} \label{sec:lines}

In this section we prove Theorem~\ref{T1}.

\paragraph {TSP tours.}
We present a $\frac4{\pi}(1+\eps)$-approximation algorithm 
for computing a minimum-perimeter intersecting polygon
of a set $\L$ of $n$ lines, running in $O(n)$ time. 
If we set $\eps =1/200$, we get the approximation ratio $1.28$. 
For technical reasons (see below) we choose $\eps \in [1/300,1/200]$
uniformly at random, and the approximation ratio remains $1.28$. 
The algorithm combines ideas from~\cite{DJ10,DM03,J02}. 
As in ~\cite{DJ10}, we first use the fact (guaranteed by
Lemma~\ref{L1}) that every convex polygon $P$ is contained in some
rectangle $Q = Q(P)$ that satisfies $\per(Q) \le \frac{4}{\pi}\,\per(P)$.
In particular, this holds for the optimal tours $T^*(\L)$ and $T^*(\R)$. 
Then we use linear programming to compute a $(1+\eps)$-approximation
for the minimum-perimeter intersecting rectangle of $\L$ (as in 
\cite{DJ10}; see also~\cite{J02}).

\medskip
\noindent{\bf Algorithm~A1.}
\begin{itemize}
\item[]
Let $m = \lceil \frac{\pi}{4\eps} \rceil$.
For each direction $\alpha_i = i\cdot 2\eps$, $i=0,1,\ldots,m-1$,
compute a minimum-perimeter intersecting rectangle $Q_i$ of $\L$
with orientation $\alpha_i$.
Return the rectangle with the minimum perimeter over all $m$ directions.
\end{itemize}

We now show how to compute the rectangle $Q_i$ by linear programming.
By a suitable rotation (by angle $\alpha_i$) of the set $\L$ of lines
in each iteration $i \ge 1$, we can assume that the rectangle $Q_i$
is axis-parallel.  This can be obtained in $O(n)$ time (per iteration). 
Let $\{q_1,q_2,q_3,q_4\}$ be the four vertices of $Q_i$ 
in counterclockwise order, starting with the lower leftmost corner
as in Figure~\ref{f2}.
As in~\cite{J02}, let $\L=\L^- \cup \L^+$ be the partition of $\L$
into lines of negative slope and lines of positive slope. 
By setting $\eps \in [1/300,1/200]$ uniformly at random, 
in each iteration $i$, with probability $1$ there are no vertical
lines in (the rotated set) $\L$. 

Observe (as in~\cite{J02}), that a line in $\ell \in \L^+$ intersects
$Q_i$ if and only if $q_2$ and $q_4$ are separated by $\ell$ (points
on $\ell$ belong to both sides of $\ell$). Similarly, a line in $\ell
\in \L^-$ intersects $Q_i$ if and only if $q_1$ and $q_3$ are
separated by $\ell$. 
The objective of minimum perimeter is naturally expressed as a linear function.
The resulting linear program has $4$ variables $x_1,x_2,y_1,y_2$ 
for the rectangle $Q_i =[x_1,x_2] \times [y_1,y_2]$, and $2n+2$ constraints.

\begin{align*} \label{LP1}
\textup{minimize} \quad & 2(x_2-x_1) + 2(y_2 -y_1) \ \ \ \quad \quad \textup{(LP1)}
\\
\textup{subject to} \quad & \left\{
\begin{array}{lll}
y_2 \geq a x_1 +b, & \ell : y=ax+b \in \L^+ \\ 
y_1 \leq a x_2 +b, & \ell : y=ax+b \in \L^+ \\ 
y_1 \leq a x_1 +b, & \ell : y=ax+b \in \L^- \\ 
y_2 \geq a x_2 +b, & \ell : y=ax+b \in \L^- \\ 
x_1 \leq x_2  \\
y_1 \leq y_2 
\end{array}
\right.
\end{align*}

Let $Q^*$ be a minimum-perimeter intersecting rectangle of $\L$.
To account for the error made by discretization, we need the following
easy fact; see~\cite[Lemma 2]{DJ10}. 

\begin{lemma} {\rm~\cite{DJ10}.} \label{L2}
There exists an $i \in \{0,1,\ldots,m-1\}$ such that 
$\per(Q_i) \leq (1+\eps)\,\per(Q^*)$. 
\end{lemma}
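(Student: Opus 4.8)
The plan is to exploit the fact that the discretization directions $\alpha_i = i\cdot 2\eps$ for $i=0,1,\ldots,m-1$ are spaced $2\eps$ apart, and that $m = \lceil \pi/(4\eps)\rceil$ ensures these directions cover the full range of distinct rectangle orientations (which is only $[0,\pi/2)$, since a rectangle rotated by $\pi/2$ is the same rectangle; in fact even $[0,\pi/2)$ is covered with room to spare because $m\cdot 2\eps \ge \pi/2$). Let $\beta$ be the orientation of the optimal rectangle $Q^*$, normalized so that $\beta \in [0,\pi/2)$ — actually it suffices to work modulo the symmetry and assume $\beta \in [0,2\eps m)$. The key step is to choose the index $i$ so that $|\alpha_i - \beta| \le \eps$: since consecutive grid directions are $2\eps$ apart and the grid spans an interval of length at least $\pi/2 \ge \beta$, such an $i$ exists (taking $i$ to be the nearest grid point to $\beta$).

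Next I would bound $\per(Q_i)$ in terms of $\per(Q^*)$. The rectangle $Q_i$ is, by construction, the minimum-perimeter intersecting rectangle of $\L$ with orientation $\alpha_i$, so it is no longer than \emph{any} intersecting rectangle of orientation $\alpha_i$; in particular, $\per(Q_i) \le \per(R)$ where $R$ is the smallest axis-$\alpha_i$-aligned rectangle that contains $Q^*$. Since $Q^*$ is itself a rectangle, $R$ is obtained by circumscribing a rectangle of orientation $\alpha_i$ about a rectangle of orientation $\beta$, where the two orientations differ by an angle $\delta = |\alpha_i - \beta| \le \eps$. A direct computation (writing the half-perimeter of $R$ as a sum of projections of the sides of $Q^*$ onto the two axis directions of $R$) shows that $\per(R) = (\cos\delta + \sin\delta)\,\per(Q^*)$. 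Hence
\[
\per(Q_i) \le (\cos\delta + \sin\delta)\,\per(Q^*).
\]

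Finally I would verify that $\cos\delta + \sin\delta \le 1 + \eps$ for $0 \le \delta \le \eps$ (with $\eps$ small, say $\eps \le 1/200$). Since $\cos\delta \le 1$ and $\sin\delta \le \delta \le \eps$, we get $\cos\delta + \sin\delta \le 1 + \eps$ immediately, which completes the proof. The only mild subtlety — and the one place to be careful — is the bookkeeping on orientations: one must confirm that the family $\{\alpha_i\}$ together with the inherent $\pi/2$-periodicity of rectangle orientations really does leave no gap larger than $2\eps$, so that the nearest grid direction to $\beta$ is within $\eps$; this is exactly why $m$ is chosen as $\lceil \pi/(4\eps)\rceil$ rather than something smaller. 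Everything else is a one-line trigonometric estimate, so I do not expect a genuine obstacle here.
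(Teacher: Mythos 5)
Your proof is correct and is essentially the standard discretization argument that the paper delegates to~\cite{DJ10}: pick the grid direction $\alpha_i$ within $\eps$ of the orientation of $Q^*$ (possible because the $\pi/2$-periodic grid has gaps at most $2\eps$), circumscribe $Q^*$ by a rectangle $R$ of orientation $\alpha_i$ (an intersecting rectangle by Observation~\ref{O1}), and use $\per(R)=(\cos\delta+\sin\delta)\per(Q^*)\le(1+\eps)\per(Q^*)$. No gaps.
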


By Observations~\ref{O1} and~\ref{O2}, and by Lemmas~\ref{L1} and~\ref{L2}, the
algorithm A1 computes a tour that is at most $1.28$ longer than the
optimal. The algorithm solves a constant number of $4$-dimensional
linear programs, each in $O(n)$ time~\cite{M84}.  The overall time is $O(n)$.

\paragraph {TSP paths.}
The key to the improvement is offered by the following.

\begin{observation}\label{O3}
Let $Q$ be a rectangle. Then $Q$ intersects a set of lines $\L$ if and only
if any three sides of $Q$ intersect $\L$. 
\end{observation}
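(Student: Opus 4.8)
The plan is to prove both directions of the equivalence; the forward direction is trivial, so the content is the converse: if three sides of the rectangle $Q$ each meet every line in $\L$, then $Q$ itself intersects every line in $\L$. Of course $Q$ already intersects every such line (its boundary does), so the statement should be read as: the \emph{union of any three sides} of $Q$ is an intersecting set for $\L$, i.e.\ every line of $\L$ meets that union. Thus it suffices to show: if a line $\ell$ misses three sides of $Q$, then $\ell$ misses $Q$ entirely, which is a contrapositive reformulation of the nontrivial direction.

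First I would set up coordinates so that $Q = [x_1,x_2]\times[y_1,y_2]$ is axis-parallel, with the four sides labeled (bottom, right, top, left). Suppose some line $\ell$ meets $Q$ but misses three of the four sides — without loss of generality it misses the bottom, left, and top sides, so it can only possibly meet the right side. A line that intersects the convex body $Q$ and enters its interior must cross the boundary $\partial Q$ in at least two boundary points (or be tangent at a single vertex, or overlap an edge); in all cases, a line crossing the interior exits through two points of $\partial Q$. If both of those points lie on the single remaining side (the right side), then the segment $\ell\cap Q$ lies entirely on that side, so $\ell$ in fact meets that side — contradiction with the assumption that $\ell$ meets $Q$ only if it is interior-crossing. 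The careful step is handling the degenerate cases: $\ell$ passing through exactly one vertex of $Q$ (then it touches two incident sides, and those two sides cannot both be the excluded ones unless they coincide), and $\ell$ containing an entire side of $Q$ (then it trivially meets that side). Enumerating the ways two boundary points can be distributed among the four sides, one checks that in every case at least two distinct sides are hit, hence $\ell$ cannot miss three of them.

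The cleanest route is probably the following counting argument. Let $\ell$ be any line meeting $Q$. Then $\ell\cap Q$ is either a single point or a segment $\sigma$. If it is a single point $p\in\partial Q$, then $p$ lies on a side or a vertex; a vertex lies on two sides, so $\ell$ meets at least one — but we need at least two, so this case needs the assumption that $\ell$ actually crosses $Q$, which we may assume by first discarding lines merely grazing $Q$ (they still meet the side/vertex they touch, but to get ``three sides'' we argue differently — see below). If $\ell\cap Q=\sigma$ is a nondegenerate segment with endpoints $a,b\in\partial Q$, then $a$ and $b$ each lie on at least one side, and the two endpoints of a chord of an axis-parallel rectangle can lie on a common side only if $\sigma$ is contained in that side, in which case $\sigma$ lies on \emph{one} side but then $\sigma$ is disjoint from the rectangle's interior and one re-examines: actually a chord lying on a side still counts as meeting that side; to meet three sides we want $a$ and $b$ on distinct sides. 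So the real claim reduces to: for any chord of $Q$, its two endpoints lie on two distinct sides, UNLESS the chord lies along a side. I would therefore state the argument as: every line meeting $Q$ meets at least two of its four sides, with the lone exception of a line supporting a single side; but such a line still meets that side, and ``meeting three sides'' is about coverage of $\L$, not about a single line.

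The main obstacle, then, is purely bookkeeping of the degenerate configurations — a line through a vertex, a line containing a side, a line tangent to $Q$ at a vertex. None of these is deep; the risk is only in stating the observation precisely enough that the equivalence is literally true. The expected resolution is to interpret ``any three sides of $Q$ intersect $\L$'' as ``for every choice of three sides, their union meets every line of $\L$,'' and to prove the contrapositive ``a line missing three sides misses the fourth as well, hence misses $Q$'' by the two-endpoints-on-distinct-sides fact above, dispatching the vertex and edge-overlap cases by direct inspection. This is exactly the kind of short geometric lemma whose proof is a sentence or two once the right picture (Figure~\ref{f2}) is in hand.
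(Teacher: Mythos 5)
Your proposal is essentially the paper's own argument: the paper simply observes that any line intersecting $Q$ intersects at least two of its four closed sides, so by pigeonhole it meets at least one of any three chosen sides. The one loose end in your write-up is the claimed ``lone exception'' of a line supporting a single side --- this is not actually an exception, since such a line passes through the two vertices bounding that side, and each vertex also belongs to an adjacent closed side, so the line still meets (at least) two sides; with that noted, your lengthy case analysis collapses to the paper's two-sentence proof.
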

\begin{proof}
Fix any three sides of $Q$: $\{s_1,s_2,s_3\}$ (each $s_i$ is a closed
segment). Now if $\ell$ is a line intersecting $Q$, then $\ell$ intersects 
at least two sides of $Q$, hence it intersects at least one element of 
$\{s_1,s_2,s_3\}$, as required. 
\end{proof}

The next lemma gives a quantitative upper bound on the total length of 
three shorter sides of a rectangle enclosing a curve. 

\begin{lemma} \label{L3}
Any open curve of length $L$ can be included in a rectangle $Q$,
so that $ \per(Q) - \longest(Q) \leq \sqrt{2} L$. This inequality is
the best possible.
\end{lemma}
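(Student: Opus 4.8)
The plan is to take an arbitrary open curve $\gamma$ of length $L$, consider its two endpoints $u$ and $v$, and choose the orientation of the rectangle so that its longest side is parallel to the segment $uv$. More precisely, I would set up coordinates so that $uv$ is horizontal, and let $Q$ be the smallest axis-parallel rectangle containing $\gamma$. Write $Q = [x_1,x_2]\times[y_1,y_2]$, put $W = x_2 - x_1$ (width) and $H = y_2 - y_1$ (height). The claim to establish is that $W \geq H$ for this choice, so that $\longest(Q) = W$ and hence $\per(Q) - \longest(Q) = W + 2H$, and then that $W + 2H \leq \sqrt{2}\,L$.

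The first step is the geometric observation that the curve, traveling from $u$ to $v$, must reach all four sides of $Q$; split $\gamma$ at the points where it first touches the top side and the bottom side (in whichever order they occur along $\gamma$) to break $\gamma$ into three subcurves whose horizontal extents and vertical extents can be controlled. A clean way to run this: let $\gamma$ first attain the extreme $y$-value $y_1$ at a point $p$ and the extreme value $y_2$ at a point $q$; the portion of $\gamma$ between $p$ and $q$ has vertical extent at least $H$, and the two remaining portions (from $u$ to the first of $p,q$, and from the second of $p,q$ to $v$) together with the middle portion must also cover the full horizontal extent $W$. Applying the bound $\len(\sigma) \geq \sqrt{\Delta x(\sigma)^2 + \Delta y(\sigma)^2}$ to each of the three pieces and using convexity of $t\mapsto\sqrt{t}$ (equivalently, the triangle inequality in the plane of "horizontal-spent versus vertical-spent" vectors), one gets a lower bound on $L$ of the form $\sqrt{W^2 + (2H)^2}$ or better — the factor $2$ on $H$ coming precisely from the fact that the curve must go down to $y_1$ and up to $y_2$, a vertical round trip of total height $2H$ relative to a path that enters and leaves near the same height. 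Combined with $W \geq H$, the inequality $\sqrt{W^2 + 4H^2} \le L$ still isn't quite $W+2H \le \sqrt2 L$, so the right normalization is: from $W \ge H$ we get $W + 2H \le W + 2H$ and $\sqrt{W^2+(2H)^2}\ge \tfrac{1}{\sqrt2}(W+2H)$ would require $W=2H$; instead one should bound the three pieces so that the total horizontal displacement is $W$ and the total vertical displacement (absolute) is at least $2H$, giving $L \ge \sqrt{W^2+4H^2}$, and then separately note $W+2H \le \sqrt{2}\sqrt{W^2+(2H)^2} \le \sqrt2 L$ by Cauchy–Schwarz applied to the vector $(W,2H)$. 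That last Cauchy–Schwarz step ($a+b \le \sqrt2\sqrt{a^2+b^2}$) is what produces the constant $\sqrt2$, and it is tight when $W = 2H$.

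The main obstacle is making the decomposition of $\gamma$ into three arcs rigorous and checking that the chosen orientation really does make $W \geq H$ — i.e., that orienting the rectangle along the endpoint chord $uv$ is the correct move, and that one cannot be forced into a configuration where the "long side" is actually the vertical one. I expect to handle this by a continuity/extremal argument: either $W \ge H$ already for the $uv$-aligned rectangle, or one rotates until width equals height and argues the bound degrades monotonically; alternatively, prove the stronger symmetric statement $W + 2H \le \sqrt2 L$ whenever $W \ge H$ together with $H + 2W \le \sqrt2 L$ whenever $H \ge W$, and take whichever side is longer as $\longest(Q)$ — both follow from the same $L \ge \sqrt{(\text{long})^2 + 4(\text{short})^2}$ estimate.

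For tightness (the "best possible" claim), I would exhibit a curve matching the bound: take a right angle, i.e., the two-segment polyline from $(-a,0)$ up to $(0,H)$ is not extremal; instead use the "staircase" degenerate curve, e.g., the path $u=(0,0)\to(W/2,H)\to(W,0)$, whose length is $2\sqrt{(W/2)^2+H^2}$, and set $W = 2H$ so that length $= 2\sqrt{H^2+H^2} = 2\sqrt2\,H$, while $\per(Q)-\longest(Q) = W + 2H = 4H$; the ratio is $4H/(2\sqrt2 H) = \sqrt2$, confirming the constant cannot be improved. I would present this example explicitly and verify the two quantities, noting that the enclosing rectangle of this polyline is exactly $[0,W]\times[0,H]$ with longest side $W$.
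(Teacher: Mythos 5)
Your overall strategy is sound and your extremal example is exactly the paper's (the two legs of an isosceles right triangle, i.e.\ $W=2H$ with the apex on the top side), and the final step $W+2H\le\sqrt2\,\sqrt{W^2+(2H)^2}$ is the correct source of the constant. But there is a genuine gap in the middle: the three-arc decomposition you sketch does not prove $L\ge\sqrt{W^2+4H^2}$. Splitting $\gamma$ only at the first bottom contact $p$ and first top contact $q$ and summing the endpoint-displacement bounds $\len(\sigma_i)\ge\sqrt{\Delta x(\sigma_i)^2+\Delta y(\sigma_i)^2}$ controls the vertical travel (after flipping signs the $y$-displacements sum to $2H$), but the $x$-displacements of the three arcs sum only to $|uv|$, not to $W$: the curve can attain its full horizontal extent by wandering left and right \emph{inside} one of the three arcs, with all of $u,v,p,q$ at essentially the same abscissa, in which case your lower bound degenerates to roughly $2H$ and misses $W$ entirely. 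The inequality $L^2\ge W^2+4H^2$ is in fact true, but to get it you must also cut at the left- and right-side contacts and run an unfolding/reflection argument across all four sides (this is precisely what the paper does in the proof of its Lemma~5, obtaining $L^2\ge(2w-|ab|)^2+4h^2$ and then using $|ab|\le w$); "covering the full horizontal extent" is not the same as "having total horizontal displacement $W$."

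Two further remarks. First, your worry about whether the $uv$-aligned bounding box satisfies $W\ge H$ is moot: since $\per(Q)-\longest(Q)=w+h+\min(w,h)\le w+2h$ whichever side is longer, it always suffices to prove $w+2h\le\sqrt2\,L$ for the box whose width is measured along $uv$; no case analysis or rotation argument is needed. Second, the paper reaches $w+2h\le\sqrt2\,L$ by a different and more elementary mechanism that avoids the Euclidean composite bound altogether: subdivide $\gamma$ at \emph{every} point where it meets a side of $Q$, apply the per-segment $\ell_1$--$\ell_2$ inequality $\Delta x(s)+\Delta y(s)\le\sqrt2\,|s|$ to each chord, and observe that $\sum\Delta x(s_i)\ge 2w-|ab|\ge w$ and $\sum\Delta y(s_i)\ge 2h$ (the latter because $a$ and $b$ share a $y$-coordinate). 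Summing gives $w+2h\le\sqrt2\sum|s_i|\le\sqrt2\,L$ in one stroke. If you prefer your route through $L\ge\sqrt{W^2+4H^2}$, it works, but you must supply the reflection argument; as written, the decomposition step would fail.
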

\begin{proof}
Let $\gamma$ be an open curve of length $L=\len(\gamma)$, and let $a,b \in
\gamma$ be the two endpoints of $\gamma$. We can assume w.l.o.g.\ that
$ab$ is a horizontal segment, and let $Q=Q(\gamma)$ be a minimal axis-aligned
rectangle containing $\gamma$. Write $z=|ab|$. 
Let $w$ and $h$ the lengths of the horizontal and vertical sides of
$Q$, respectively (\ie, the width and
height of $Q$). It suffices to show that $w +2h \leq \sqrt{2} L$. 

By construction $\gamma$ meets each side of $Q$. 
We trace $\gamma$ from $a$ to $b$ and subdivide it into a finite
number of open sub-curves $\gamma_i$; the endpoints of each sub-curve 
$\gamma_i$ belong to two distinct sides of $Q$. We denote by $s_i$ the
segment connecting the two endpoints of $\gamma_i$. 
By concatenating these segments (in the same traversal order) we get 
a polygonal curve connecting $a$ and $b$. We call this (not
necessarily unique) curve, a polygonal curve induced by $\gamma$;
see Figure~\ref{f1}.
\begin{figure*}[hbtb]
\centerline{\epsfxsize=3.1in \epsffile{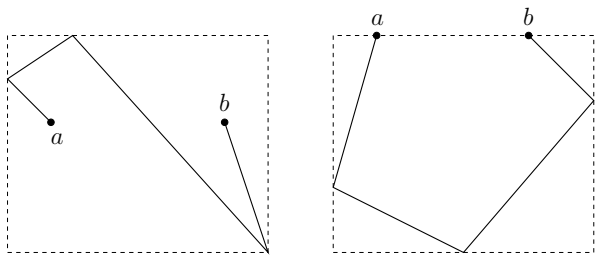}}
\caption{Polygonal curves  induced by $\gamma$ (two examples, in
  bold). The corresponding rectangles $Q$ are drawn in dashed 
  lines.} 
\label{f1}
\end{figure*}

For any segment $s$, we have $\Delta{x}(s) + \Delta{y}(s) \leq |s| \sqrt2$: 
indeed,  $\Delta{x}(s) + \Delta{y}(s) = |s|(\sin \alpha + \cos \alpha) \leq |s|
\sqrt2$, by a well-known trigonometric inequality
(here $\alpha \in [0,\pi/2]$). 
By adding the above inequalities for all segments $s_i$ (and sub-curves
$\gamma_i$) we obtain
\begin{equation} \label{E1}
\sum \left(\Delta{x}(s_i) + \Delta{y}(s_i)\right) 
\leq \left(\sum |s_i|\right) \sqrt2 
\leq \len(\gamma) \sqrt2 =L \sqrt2.
\end{equation}

On the other hand we have $\sum \Delta{x}(s_i) \geq w$: 
indeed, $\gamma$ starts at $a$ and meets the left and right sides of $Q$ 
before returning to $b$, hence the horizontal projections of the
segments $s_i$ sum up to at least $2w-|ab|=2w-z \geq w$.
Similarly, we have $\sum \Delta{y}(s_i) \geq 2h$: 
indeed, $\gamma$ starts at $a$ and meets the top and bottom sides of $Q$ 
before returning to $b$. 
Since $a$ and $b$ have the same $y$-coordinate, the vertical
projections of the segments $s_i$ cover twice the height of $Q$. 
Consequently, we have  
\begin{equation} \label{E2}
\sum (\Delta{x}(s_i) + \Delta{y}(s_i)) \geq w +2h.
\end{equation}

Putting~\eqref{E1} and~\eqref{E2} together yields
\begin{equation} \label{E3}
w + 2h \leq \sum \left(\Delta{x}(s_i) + \Delta{y}(s_i)\right) \leq L \sqrt2, 
\end{equation}
as required. 

To see that this inequality is tight, let $\gamma$ be a two-segment
polygonal path made from the two unit sides of an isosceles right
triangle. Then $L=\len(\gamma)=2$, while the rectangle $Q$ enclosing
$\gamma$ has sides $\sqrt2$ and $\sqrt2/2$ respectively. 
The lengths of its three shorter sides sum up to $\sqrt2
+ 2 \sqrt2/2= 2\sqrt2 =L \sqrt2$. It can be verified that the
sum of the three smallest sides of any other enclosing rectangle 
is larger (details in the Appendix), hence the rectangle $Q$
constructed in our proof is optimal for $\gamma$. The proof of
Lemma~\ref{L3} is now complete.  
\end{proof}

To compute a TSP path for a set of $n$ lines, we use the algorithm A2
we describe next. This algorithm is similar to algorithm A1, 
described earlier. A2 computes a rectangle in each direction from a
given sequence.   
The only difference in the linear program is that instead of
minimizing the perimeter of an intersecting rectangle, $ 2(x_2-x_1) +
2(y_2 -y_1) $, it minimizes the sum of the lengths of three 
sides, namely $ (x_2-x_1) + 2(y_2 -y_1) $. 
The objective function is not symmetric with respect to the two
coordinates axes, and so the number of directions $m$, from algorithm
A1, is $m = \lceil \frac{\pi}{2\eps} \rceil$ in algorithm A2.
Let now $Q^*$ be an intersecting rectangle of $\L$ with minimum sum of
the lengths of three sides. Analogous to Lemma~\ref{L2} we have

\begin{lemma} \label{L4}
There exists an $i \in \{0,1,\ldots,m-1\}$ such that 
$$ \per(Q_i) - \longest(Q_i) \leq 
(1+\eps)\,(\per(Q^*)- \longest(Q^*)). $$
\end{lemma}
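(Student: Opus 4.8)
The plan is to mimic exactly the discretization argument used for Lemma~\ref{L2} (i.e., \cite[Lemma 2]{DJ10}), adapting it to the asymmetric objective $f(Q) = \per(Q) - \longest(Q)$. First I would recall the geometric meaning of this objective: for a rectangle $Q$ of width $w$ and height $h$ with, say, $w \ge h$, we have $f(Q) = 2h + w$, the sum of the lengths of the three shorter sides. The key observation is how $f$ behaves under a small rotation: if $Q^*$ is the optimal intersecting rectangle (in \emph{some} orientation $\alpha^*$) and $Q_i$ is the minimum-$f$ intersecting rectangle in the fixed orientation $\alpha_i$ that is closest to $\alpha^*$, then rotating $Q^*$ slightly to orientation $\alpha_i$ and taking the smallest axis-parallel (in the $\alpha_i$ frame) rectangle containing it only inflates $f$ by a bounded factor.

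Next I would make the rotation estimate quantitative. Since the directions $\alpha_i = i \cdot 2\eps$ are spaced $2\eps$ apart and we now use $m = \lceil \pi/(2\eps)\rceil$ of them to cover the full range $[0,\pi)$ of rectangle orientations (the asymmetry of $f$ forces us to distinguish orientations differing by $\pi/2$, unlike for the perimeter), there is some $i$ with $|\alpha_i - \alpha^*| \le \eps$. Rotating a rectangle of dimensions $w^* \times h^*$ by an angle $\delta$ with $|\delta| \le \eps$ and circumscribing an axis-parallel rectangle yields new dimensions at most $w^*\cos\delta + h^*\sin\delta$ and $h^*\cos\delta + w^*\sin\delta$; plugging into $f$ and using $\cos\delta \le 1$, $\sin\delta \le \delta \le \eps$, one gets $f(\text{rotated box}) \le f(Q^*) + O(\eps)\cdot(\text{something bounded by }f(Q^*))$, hence $f \le (1+c\eps) f(Q^*)$ for an absolute constant $c$. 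After rescaling $\eps$ by $1/c$ (or choosing the constant in $m$ accordingly, exactly as in \cite{DJ10}) this becomes the clean bound $(1+\eps)$. Since $Q_i$ minimizes $f$ over all intersecting rectangles of orientation $\alpha_i$, and the rotated-and-circumscribed box is one such rectangle (it still meets every line, by Observation~\ref{O1}), we conclude $f(Q_i) \le (1+\eps) f(Q^*)$, which is the claimed inequality.

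The main obstacle, and the only place requiring care, is controlling the coefficient in front of $f(Q^*)$ in the rotation estimate: one must verify that the extra terms produced by circumscribing after a rotation by $\delta$ are bounded by a constant times the \emph{three-shorter-sides} quantity $f(Q^*) = 2h^* + w^*$ and not merely by the perimeter or by $w^*$ alone — in particular when $Q^*$ is a long thin rectangle, the dangerous term is $w^*\sin\delta$ appearing in the \emph{short} direction, but this is at most $\eps w^* \le \eps f(Q^*)$, so it is fine. A secondary subtlety is confirming that $m = \lceil \pi/(2\eps)\rceil$ really suffices: because $f$ is invariant under swapping width and height but not under a $90^\circ$ rotation of the rectangle relative to a \emph{fixed} reference frame, orientations must be sampled over a range of length $\pi$ rather than $\pi/2$, which is precisely why $m$ doubled relative to algorithm~A1. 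Once these two points are checked, the proof is identical in structure to that of Lemma~\ref{L2} and I would simply cite \cite[Lemma 2]{DJ10} for the routine remainder.
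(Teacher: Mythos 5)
Your argument is correct and is precisely the discretization-plus-rotation argument the paper has in mind: the paper gives no proof of Lemma~\ref{L4}, merely asserting it is ``analogous to Lemma~\ref{L2}'', and your bounding-box-after-rotation computation, together with the observation that the asymmetric objective forces sampling orientations over a range of length $\pi$, is exactly that analogue. Your care about the constant is warranted --- the dangerous term $w^*\sin\delta$ enters the \emph{doubled} short side, so the estimate is really $(1+2\eps)$ with spacing $2\eps$ rather than $(1+\eps)$ --- but, as you note, this is absorbed by rescaling $\eps$ and does not affect the stated approximation ratio.
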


By Lemma~\ref{L3} and Lemma~\ref{L4} the approximation ratio is
$\sqrt{2} (1+\eps)$, and we set $\eps =1/250$ (or slightly smaller,
as before), to obtain the approximation ratio $1.42$.  
This completes the proof of Theorem~\ref{T1}.

\section{TSP for rays} \label{sec:rays}

As noted in~\cite{J02}: If the lines are replaced by line segments
the problem of finding an optimal tour becomes NP-hard. Should the 
lines be replaced by rays, we get a variant of the problem that lies
somewhere in between the variant for lines and that for line segments,
and whose complexity is open. In this section we prove Theorem~\ref{T2}.

\paragraph {TSP tours.}
The algorithm A1 from Section~\ref{sec:lines} can be adapted to
compute a $\frac4{\pi}(1+\eps)$-approximate tour for a set $\R$ of $n$
rays. Let $m = \lceil \frac{\pi}{4\eps} \rceil$.
The resulting algorithm A3 for computing an approximate tour for 
$n$ given rays computes a minimum-perimeter rectangle intersecting 
all rays over all $m$ directions.
As before, assume that in the $i$th iteration, the rectangle 
$Q_i=\{q_1,q_2,q_3,q_4\}$ is axis-parallel. A ray in $\R$ is said to 
belong to the $i$th quadrant, $i=1,2,3,4$, if  its head belongs to the
$i$th quadrant when placed with its apex at the origin. 
Let $\R=\R_1 \cup \R_2 \cup \R_3 \cup \R_4$ be the partition of the
rays in $\R$ (after rotation) as dictated by the four quadrants. 
See Figure~\ref{f2}. 
\begin{figure}[htbp]
\centerline{\epsfxsize=2.3in \epsffile{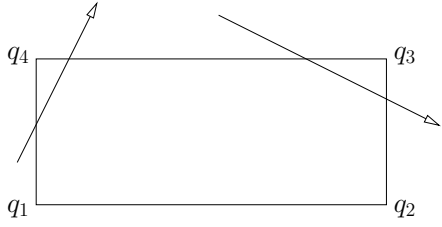}}
\caption{The rectangle $Q_i$, and two rays, one in $\R_1$ and
one in $\R_4$ that intersect it.}
\label{f2}
\end{figure}

Observe that:
\begin{itemize}
\item A ray $\rho \in \R_1$ intersects $Q_i$ if and only if $q_2$ and
$q_4$ are separated by $\ell(\rho)$, and the apex (endpoint) of $\rho$ is
dominated by $q_3$.  
\item A ray $\rho \in \R_2$  intersects $Q_i$ if and only if $q_1$ and $q_3$ are
separated by $\ell(\rho)$, and the apex of $\rho$ lies right and below $q_4$. 
\item A ray $\rho \in \R_3$ intersects $Q_i$ if and only if $q_2$ and $q_4$ are
separated by $\ell(\rho)$, and the apex of $\rho$ dominates $q_1$. 
\item A ray $\rho \in \R_4$  intersects $Q_i$ if and only if $q_1$ and $q_3$ are
separated by $\ell(\rho)$, and the apex of $\rho$ lies left and above $q_2$. 
\end{itemize}

The constraints listed above correct an error in the old
$\sqrt{2}$-approximation algorithm from~\cite{J02}, 
where it was incorrectly demanded 
that the apexes of the rays must lie in the rectangle $Q_i$. 
Indeed, this condition is not necessary, and moreover,
may prohibit finding an approximate solution with the claimed
guarantee of $\sqrt{2}$.

Observe that these intersection conditions can be expressed as linear
constraints in the four variables, $x_1,x_2,y_1,y_2$. 
The algorithm A3 computes a minimum-perimeter rectangle intersecting 
all rays over all $m$ directions. For each of these directions,
the algorithm solves a linear program with four variables 
and $O(n)$ constraints, as described above. As such, the algorithm
takes $O(n)$ time~\cite{M84}. The approximation ratio is
$\frac4{\pi}(1+\eps)$, and we set $\eps =1/200$ (or slightly smaller,
as before), to obtain the approximation ratio $1.28$.  

\paragraph {TSP paths.}
We need an analogue of Lemma~\ref{L2} for open curves. 
This is Lemma~\ref{L5} below which is obviously also of independent
interest. 

\begin{lemma} \label{L5}
Any open curve of length $L$ can be included in a rectangle $Q$,
so that $ \per(Q) \leq \sqrt{5} L$. This inequality is
the best possible.
\end{lemma}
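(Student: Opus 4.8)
The plan is to follow exactly the template of the proof of Lemma~\ref{L3}, but now bound the \emph{full} perimeter $\per(Q) = 2(w+h)$ of an enclosing rectangle rather than the three-sides quantity $w+2h$. Let $\gamma$ be an open curve of length $L$ with endpoints $a,b$, take $ab$ horizontal w.l.o.g., and let $Q$ be the minimal axis-aligned rectangle containing $\gamma$, of width $w$ and height $h$. As before, subdivide $\gamma$ into sub-curves $\gamma_i$ whose endpoints lie on distinct sides of $Q$, let $s_i$ be the chord of $\gamma_i$, and form the induced polygonal curve. The difference from Lemma~\ref{L3} is that now, to exploit all of $\gamma$'s length, I want a \emph{weighted} projection inequality: for a suitable constant $c\ge 0$ we have $\Delta x(s_i) + c\,\Delta y(s_i) \le |s_i|\sqrt{1+c^2}$ for every segment (again by $\sin\alpha + c\cos\alpha \le \sqrt{1+c^2}$). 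Summing over $i$ and using $\sum|s_i|\le L$ together with the covering facts $\sum\Delta x(s_i)\ge 2w - z \ge w$ and $\sum\Delta y(s_i)\ge 2h$ gives
\[
w + 2ch \;\le\; \sum\bigl(\Delta x(s_i) + c\,\Delta y(s_i)\bigr) \;\le\; \sqrt{1+c^2}\,L .
\]
Choosing $c=2$ makes the left side $w+4h$, which is weaker than what we need; the right choice is to match $w+2ch$ to a scalar multiple of the target $w+h$, i.e. take $2c=1$, $c=1/2$, yielding $w + h \le \tfrac{\sqrt5}{2}\,L$, hence $\per(Q) = 2(w+h) \le \sqrt5\,L$. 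So the weight $c=1/2$ is forced by the asymmetry (the horizontal extent is traversed net distance $\ge w$ after accounting for the free return via $ab$, while the vertical extent costs $2h$).

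For tightness, I would exhibit the curve achieving equality. Equality in the trigonometric step forces every chord $s_i$ to have the same direction angle $\alpha$ with $\tan\alpha = c = 1/2$; equality in $\sum|s_i| \le L$ forces each $\gamma_i$ to be the straight segment $s_i$; and equality in the covering inequalities forces $z = |ab| = 0$ (so $a = b$ — but $\gamma$ is an open curve, so really we take a limiting/degenerate configuration, or $a,b$ coincident is allowed as the curve need not be simple) and forces $\gamma$ to touch each side of $Q$ exactly in the minimal way. Concretely, the extremal $\gamma$ is a two-segment path: from a point on the bottom side up-and-over to the top side and back down, both segments of slope $\pm 1/2$, so that $\gamma$ consists of two congruent segments of horizontal extent $w/2$ and vertical extent $h$, with $L = 2\sqrt{(w/2)^2 + h^2}$ and the relation $2h = $ (vertical extent) matching $w = $ (horizontal extent) forcing $h = w/2 \cdot$ — I would pin down the exact proportions by direct computation and then confirm, as in Lemma~\ref{L3} (with the routine case analysis relegated to the Appendix), that no other enclosing rectangle of this $\gamma$ has smaller perimeter, so $Q$ as constructed is optimal and $\per(Q) = \sqrt5\,L$.

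The main obstacle I anticipate is \emph{not} the upper bound — that is a direct adaptation — but getting the tight example exactly right and the accompanying "this $Q$ is optimal for $\gamma$" verification. In Lemma~\ref{L3} the extremal curve was a clean isosceles right triangle's two legs; here the slope is $\pm 1/2$ rather than $\pm 1$, the degenerate $a=b$ issue needs a word of care (one may need to perturb and pass to the limit, or note the statement is about curves not necessarily simple and with possibly coincident endpoints), and one must check that rotating the enclosing rectangle cannot beat the axis-aligned one — the same kind of Appendix-style argument used for Lemma~\ref{L3}. I would state the extremal $\gamma$ explicitly, compute $L$ and the sides of its minimal rectangle, verify $\per(Q)=\sqrt5 L$, and defer the optimality-of-$Q$ check to the Appendix exactly as done above.
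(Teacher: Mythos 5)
Your upper-bound argument is correct, and it takes a genuinely different route from the paper. The paper proves Lemma~\ref{L5} by an unfolding argument: it picks one contact point of $\gamma$ on each side of $Q$, successively reflects $Q$ across those sides, and applies Pythagoras to the unfolded endpoint $b'$ to get $(2w-|ab|)^2+4h^2\le L^2$, hence $w^2+4h^2\le L^2$; writing $w=L/\lambda$ and maximizing $(2+\sqrt{\lambda^2-1})/\lambda$ then yields $\per(Q)\le\sqrt5\,L$. Your weighted-projection inequality $\Delta{x}(s_i)+\tfrac12\Delta{y}(s_i)\le\tfrac{\sqrt5}{2}|s_i|$, summed against $\sum\Delta{x}(s_i)\ge w$ and $\sum\Delta{y}(s_i)\ge 2h$, gives $w+h\le\tfrac{\sqrt5}{2}L$ directly and unifies Lemmas~\ref{L3} and~\ref{L5} under a single template; the choice $c=1/2$ is indeed forced by matching $w+2ch$ to a multiple of $w+h$. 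Note that the paper's quadratic inequality $w^2+4h^2\le L^2$ is the stronger intermediate statement: by Cauchy--Schwarz it implies your linear bound for every weight $c$ simultaneously, since $w+2ch\le\sqrt{1+c^2}\,\sqrt{w^2+4h^2}$, but for this lemma the single weight $c=1/2$ suffices.

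The tightness discussion, however, contains a genuine error. Equality in the chain $\sum\Delta{x}(s_i)\ge 2w-z\ge w$ forces $z=|ab|=w$, not $z=0$: the free return along $ab$ must span the full width of $Q$. Consequently the extremal curve is not a ``tent'' with coincident endpoints; with $a=b$ and two segments of slope $\pm1/2$ going up and back down one gets $w=2h$, $L=2h\sqrt5$, and $\per(Q)=6h$, whereas $\sqrt5\,L=10h$, so that configuration is far from tight. The correct extremizer (the paper's) consists of the two equal sides of an isosceles triangle with legs of length $1$ and base $4/\sqrt5$: the endpoints $a,b$ sit at the two bottom corners of $Q$, so $w=4/\sqrt5=4h$, $h=1/\sqrt5$, $L=2$, and $\per(Q)=2\sqrt5=\sqrt5\,L$. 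You correctly identified the slope $\pm1/2$ of the legs but not the position of the endpoints. Finally, ``best possible'' also requires verifying that no rotated enclosing rectangle of this curve has smaller perimeter; the paper carries out this case analysis in the Appendix, and your proposal rightly flags that this step is needed but does not supply it.
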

\begin{proof}
Let $\gamma$ be an open curve of length $L=\len(\gamma)$, and let $a,b \in
\gamma$ be the two endpoints of $\gamma$. We can assume w.l.o.g.\ that
$ab$ is a horizontal segment, and let $Q=Q(\gamma)$ be a minimal axis-aligned
rectangle containing $\gamma$. Let $w$ and $h$ be the lengths of the 
horizontal and vertical sides of $Q$, respectively (\ie, the width and
height of $Q$). 

It suffices to show that $2w +2h \leq \sqrt{5} L$. 
Write $w=L/\lambda$, where $\lambda \geq 1$. 
The case $\lambda=1$ corresponds to a degenerate
enclosing rectangle, when $\gamma$ is a line segment,
and for which $2w +2h =2w = 2L \leq \sqrt{5} L$. 
We can therefore assume in the following that $\lambda>1$. 
By construction $\gamma$ meets each side of $Q$. 
Arbitrarily select a point of $\gamma$ on each of these sides
to obtain a polygonal open curve $\gamma_1$ connecting $a$ and $b$
and passing through these intermediate points (and still enclosed in $Q$). 
By construction, the intermediate points are visited in the same order by 
$\gamma$ and $\gamma_1$. 
By the triangle inequality, $\len(\gamma_1) \leq \len(\gamma)$. 

Successively reflect the rectangle $Q$ with respect to the sides
containing the intermediate points in the order of traversal. 
See Figure~\ref{f3} for an illustration. 
Let $b'$ be the final reflection of the end-point $b$ 
of $\gamma$. The segments composing $\gamma_1$ can be retrieved in the
reflected rectangles; they make up a polygonal path $\gamma_2$
connecting $a$ and $b'$. By construction we have 
$\len(\gamma_1) = \len(\gamma_2)$. 
\begin{figure}[htbp]
\centerline{\epsfxsize=5.2in \epsffile{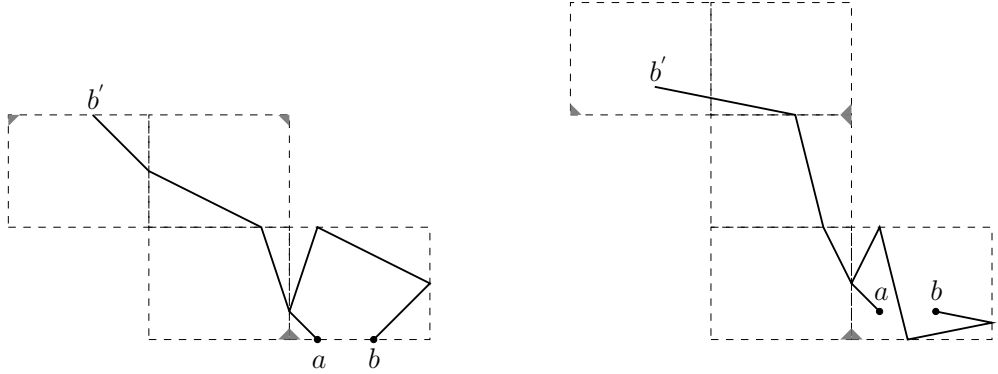}}
\caption{The reflection argument (two examples, in bold). The small
  shaded triangles indicate the lower left corner of $Q$ in all
  subsequent reflections.} 
\label{f3}
\end{figure}

It is easily seen that $\Delta{x} (ab') =2w -|ab|$ 
and $\Delta{x} (ab') =2h$. By Pythagoras' Theorem,
\begin{equation*} \label{E10}
\Delta{x}^2 (ab') + \Delta{y}^2 (ab') = |ab'|^2.
\end{equation*}
Since the shortest distance between two points is a straight line,
$ \len(\gamma_2) \geq |ab'|$. It follows that
\begin{equation} \label{E9}
L^2 = \len^2(\gamma) \geq \len^2(\gamma_1) = \len^2(\gamma_2) 
\geq |ab'|^2 = (2w -|ab|)^2 + 4h^2. 
\end{equation}
Obviously, $|ab| \leq w$, thus from~\eqref{E9} we deduce that
\begin{equation} \label{E4}
w^2 +4h^2 \leq L^2.
\end{equation}
Substituting $w=L/\lambda$ in~\eqref{E4} yields 
$L^2/\lambda^2 +4h^2 \leq L^2$, hence
\begin{equation*} \label{E5}
2h \leq \frac{\sqrt{\lambda^2-1}}{\lambda} \, L. 
\end{equation*}
It follows that 
\begin{equation} \label{E6}
\per(Q)= 2w + 2h \leq 
\left( \frac{2}{\lambda} + \frac{\sqrt{\lambda^2-1}}{\lambda} \right) L
= \left( \frac{2+\sqrt{\lambda^2-1}}{\lambda} \right) L.
\end{equation}

Consider the function 
\begin{equation*} \label{E7}
f(\lambda) = \frac{2+\sqrt{\lambda^2-1}}{\lambda}.
\end{equation*}
Its derivative is 
\begin{equation*} \label{E8}
f'(\lambda) = \frac
{\frac{\lambda^2}{\sqrt{\lambda^2-1}}-2 - \sqrt{\lambda^2-1}}
{\lambda^2} =\frac
{\lambda^2 -2\sqrt{\lambda^2-1} -(\lambda^2-1)}
{\lambda^2 \sqrt{\lambda^2-1}} =\frac
{1-2\sqrt{\lambda^2-1}}
{\lambda^2 \sqrt{\lambda^2-1}}.
\end{equation*}

It can be easily verified that $f'(\lambda)$ vanishes at
$\lambda=\sqrt5/2$, and that $f(\lambda)$ is increasing 
on the interval $(1,\sqrt5/2]$ and decreasing on the 
interval $[\sqrt5/2,\infty)$. Hence $f(\lambda)$ attains its maximum at 
$\lambda=\sqrt5/2$, that is, 
$f(\lambda) \leq f(\sqrt{5}/2)= \sqrt5$. 
According to~\eqref{E6}, we have $\per(Q) \leq L \sqrt5$, as desired.

To see that this inequality is tight, let $\gamma$ be a two-segment
polygonal path made from the two equal sides of an isosceles triangle
with sides $1$, $1$, and $4/\sqrt5$. In this case, 
$L=2$ and $\per(Q)= 2(4/\sqrt5 + 1/\sqrt5)= 2 \sqrt5 =L \sqrt5$.
It can be verified that the perimeter of any other enclosing rectangle
is larger (details in the Appendix), hence the rectangle $Q$
enclosing $\gamma$ constructed in our proof has minimum perimeter.  
The proof of Lemma~\ref{L5} is now complete.
\end{proof}

Given $\R$, we compute an approximation of the optimal TSP path 
by using algorithm A3. 
Let now $\gamma$ be an optimal TSP path for the rays in $\R$,
where $L=\len(\gamma)$. 
Since every ray in $\R$ intersects $\gamma$, every ray in $\R$
intersects  the rectangle $Q=Q(\gamma)$, as defined in the proof of
Lemma~\ref{L5}. For each of the $m$ directions, the algorithm A3
computes a minimum-perimeter rectangle (of that orientation)
intersecting each ray in $\R$.  Thus A3 computes a rectangle (\ie, a
closed path) that is a $(1+\eps)$-approximation of the minimum-perimeter
rectangle intersecting each ray in $\R$. For $\eps=1/1000$, its
perimeter is at most $(1+\eps) \sqrt5 L \leq 2.24 \, L$, as claimed. 
This completes the proof of Theorem~\ref{T2}.

\section{Final remarks}  \label{sec:remarks}

Interesting questions remain open regarding the structure of 
optimal TSP tours for lines and rays, and the degree of approximation
achievable for these problems. For instance, the following two
problems (one new and one old) deserve attention. 

\begin{itemize}

\item [(1)] Is there a polynomial-time algorithm for computing
a shortest tour (or path) for a given set of rays in the plane?

\item [(2)] Can one compute a good approximation of 
a shortest TSP tour (or path) for a given set of lines in $3$-space?
Note that this variant with parallel lines reduces to the Euclidean TSP
for points in the plane (namely the points of intersection between the
given lines and an orthogonal plane), so it is NP-hard. See also~\cite{DM03}.  

\end{itemize}

\section*{Appendix}

\paragraph{A tight bound for Lemma~\ref{L3}.}
Let $Q$ be a minimal rectangle containing $\gamma= acb$, whose 
width and height are $w$ and $h$, respectively. 
Let $\Delta=\conv(\gamma)= \conv(acb)$, where $\angle{acb}=\pi/2$.
It suffices to show that $w +2h \geq L \sqrt{2} L = 2\sqrt2$. 
By the minimality of $Q$, at least one vertex of $\Delta$ 
must coincide with a corner of $Q$, say the lower left corner $q_1$. 
If $c=q_1$, then $Q$ is a unit square, thus 
$w +2h =3 > 2\sqrt2$. Assume now that $a=q_1$, as in Figure~\ref{f4}. 
\begin{figure}[htbp]
\centerline{\epsfxsize=1.5in \epsffile{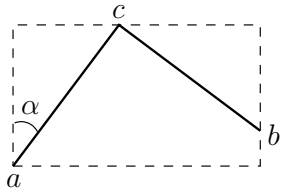}}
\caption{A rectangle containing $\gamma$.}
\label{f4}
\end{figure}

We have $w= \sin \alpha + \cos \alpha$ and $h = \cos \alpha$, where
$\alpha \in [0,\pi/4]$. 
Hence 
$ w +2h = (\sin \alpha + \cos \alpha) + 2 \cos \alpha =
3 \cos \alpha + \sin \alpha$. 
Consider the function $f(\alpha)= 3 \cos \alpha + \sin \alpha$, 
where $\alpha \in [0,\pi/4]$. 
Its derivative, $f'(\alpha)= -3  \sin \alpha + \cos \alpha$ vanishes
at $\alpha = \arctan (1/3)$, and is positive on $(0,\arctan (1/3))$ and
negative on $(\arctan (1/3), \pi/4)$. 
Hence $f(\alpha)$ attains its minimum at one of the 
endpoints of the interval $[0,\pi/4]$. 
We have $f(0)= 3$ and $f(\pi/4)= 4 \sqrt2/2= 2 \sqrt2$,
therefore $f(\alpha) \geq 2 \sqrt2$ for $\alpha \in [0,\pi/4]$.

\paragraph{A tight bound for Lemma~\ref{L5}.}
Again, let $Q$ be a minimal rectangle containing $\gamma= acb$, whose 
width and height are $w$ and $h$, respectively. 
Let $\Delta=\conv(\gamma)= \conv(acb)$, where 
$\angle{acb}=2 \arctan 2$.
It suffices to show that $2(w+h) \geq L \sqrt{5} L = 2\sqrt5$. 
By the minimality of $Q$, at least one of the two vertices $a$ and $b$
of $\Delta$ must coincide with a corner of $Q$, say the lower left
corner $q_1$.  We have $a=q_1$, as in Figure~\ref{f5}. 
We distinguish two cases:
\begin{figure}[htbp]
\centerline{\epsfxsize=4.7in \epsffile{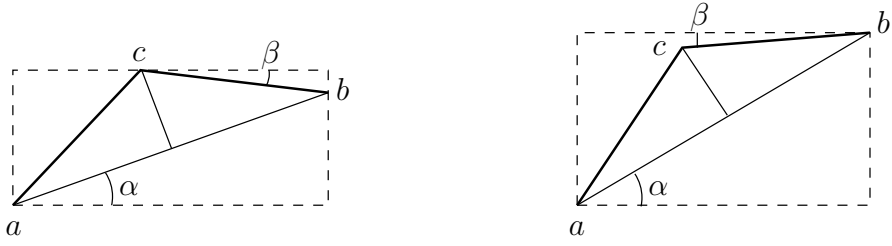}}
\caption{Two type of rectangles containing $\gamma$: Case 1 and Case 2.}
\label{f5}
\end{figure}

\medskip
\emph{Case 1.} The vertex $b$ lies on the right side of $Q$, 
and $c$ lies on the top side of $Q$, as in Figure~\ref{f5}(left). 
We have $w= \frac{4}{\sqrt5} \cos \alpha$ and 
$h = \frac{4}{\sqrt5} \sin \alpha + \sin \beta$, 
where $\alpha \in [0, \arctan (1/2)]$,
and $\beta = \pi/2 - (\pi -(\pi/2 - \alpha) - \arctan (1/2))=
\arctan (1/2) - \alpha$. This yields
\begin{align*}
\sin \beta &= \sin (\arctan (1/2) - \alpha)=
\sin \arctan (1/2) \cos \alpha - \cos \arctan (1/2) \sin \alpha \\
&= \frac{1}{\sqrt5} \cos \alpha - \frac{2}{\sqrt5} \sin \alpha. 
\end{align*}
It follows that 
$$ 2(w+h)= 2 \left(\frac{4}{\sqrt5} \cos \alpha +
\frac{4}{\sqrt5} \sin \alpha + \frac{1}{\sqrt5} \cos \alpha -
\frac{2}{\sqrt5} \sin \alpha \right) =
2 \left( \sqrt5 \cos \alpha + \frac{2}{\sqrt5} \sin \alpha \right). $$
Consider the function 
$f(\alpha)= \sqrt5 \cos \alpha + (2/\sqrt5) \sin \alpha$, 
where $\alpha \in [0,\arctan (1/2)]$. 
Its derivative, 
$f'(\alpha)= -\sqrt5  \sin \alpha + (2/\sqrt5) \cos \alpha$ vanishes
at $\alpha = \arctan (2/5)$, and is positive on $(0,\arctan (2/5))$ and
negative on $(\arctan (2/5), \arctan (1/2))$. 
Hence $f(\alpha)$ attains its minimum at one of the 
endpoints of the interval $[0,\arctan (1/2)]$. 
We have $f(0)= \sqrt5$ and $f(\arctan (1/2))= 2 + 2/5 = 12/5 > \sqrt5$,
therefore $f(\alpha) \geq \sqrt5$ for $\alpha \in [0,\arctan (1/2)]$. 
Equivalently, $\per(Q)= 2(w+h) \geq 2\sqrt5$ in this case.
(It may be noted that the rectangle with perimeter $2 \cdot
12/5=24/5$ corresponding to $\alpha = \arctan (1/2)$ is flush with the
unit side $cb$.) 

\medskip
\emph{Case 2.} 
The vertex $b$ coincides with the upper right corner of $Q$, 
and $c$ lies in the interior of $Q$, as in Figure~\ref{f5}(right). 
By symmetry, we can assume that $w \geq h$.
We have $w= (4/\sqrt5) \cos \alpha)$ and $h = (4/\sqrt5) \sin \alpha$, where
$\alpha= \beta + \arctan (1/2)$, and $\beta \in [0, \pi/4 - \arctan (1/2)]$. 
This yields 
$$ w+h= \frac{4}{\sqrt5} 
(\cos (\beta + \arctan (1/2)) + \sin (\beta + \arctan (1/2))). $$
The above expression attains its minimum at $\beta=0$, thus
\begin{align*}
\per(Q) &= 2(w+h) \geq 
\frac{8}{\sqrt5} \left( \cos \arctan (1/2) + \sin \arctan (1/2) \right) \\
&= \frac{8}{\sqrt5} \left( \frac{2}{\sqrt5} + \frac{1}{\sqrt5} \right) 
= \frac{8}{\sqrt5} \cdot \frac{3}{\sqrt5} = \frac{24}{5} > 
2\sqrt5, 
\end{align*}
in this second case.

\medskip
We therefore always have $\per(Q) \geq 2 \sqrt5$, as claimed.

\end{document}